\documentclass[a4paper,11pt,reqno]{amsart}
\usepackage[latin1]{inputenc}
\usepackage[T1]{fontenc}
\usepackage[english]{babel}
\usepackage{a4wide}
\usepackage{amsmath}
\usepackage{graphicx}
\usepackage{amssymb}
\usepackage[all]{xy}

\numberwithin{equation}{section}

\title{Irreversibility, least action principle and causality}
\author{Jacky Cresson $^{1,2}$, Pierre Inizan $^{2}$}
\address{$^1$ Laboratoire de Math\'ematiques Appliqu\'ees de Pau, Universit\'e de Pau et des Pays de l'Adour, avenue de l'Universit\'e, BP 1155, 64013 Pau Cedex, France}
\address{$^2$ Institut de M\'ecanique C\'eleste et de Calcul des \'Eph\'em\'erides, Observatoire de Paris, 77 avenue Denfert-Rochereau, 75014 Paris, France}

\begin{document}

\maketitle


\newcommand{\Dpar}[2]{\dfrac{\partial #1}{\partial #2}}
\newcommand{\dpar}[2]{\frac{\partial #1}{\partial #2}}
\newcommand{\Rk}[1]{\mathbb{R}^{#1}}
\newcommand{\R}[0]{\mathbb{R}}
\newcommand{\N}[0]{\mathbb{N}}
\newcommand{\Drl}[3]{\, {}_{#1} \mathcal{D}_{#2}^{#3} \,}
\newcommand{\Dl}[0]{\, \mathcal{D}^{\alpha} \,}
\newcommand{\Dr}[0]{\, \mathcal{D}^{\alpha}_* \,}
\newcommand{\Dli}[1]{\, \mathcal{D}^{#1} \,}
\newcommand{\Dri}[1]{\, \mathcal{D}^{#1}_* \,}
\newcommand{\mc}[1]{\mathcal{#1}}
\newcommand{\Dfp}[0]{d^+_\varepsilon}
\newcommand{\Dfm}[0]{d^-_\varepsilon}

\newtheorem{theorem}{Theorem}
\newtheorem{definition}{Definition}
\newtheorem{lemma}{Lemma}
\newtheorem{corollary}{Corollary}
\newtheorem{remark}{Remark}
\newtheorem{example}{Example}


\begin{abstract}
The least action principle, through its variational formulation, possesses a finalist aspect. It explicitly appears in the fractional calculus framework, where Euler-Lagrange equations obtained so far violate the causality principle. 
In order to clarify the relation between those two principles, we firstly remark that the derivatives used to described causal physical phenomena are in fact left ones. This leads to a formal approach of irreversible dynamics, where forward and backward temporal evolutions are decoupled.
This formalism is then integrated to the Lagrangian systems, through a particular embedding procedure.
In this set-up, the application of the least action principle leads to distinguishing trajectories and variations dynamical status.
More precisely, when trajectories and variations time arrows are opposed, we prove that the least action principle provides causal Euler-Lagrange equations, even in the fractional case.
Furthermore, the embedding developped is coherent. 
\end{abstract}

\bigskip



\section{Introduction}

The link between the least action principle and the causalilty principle has always been ambiguous. Poincar\'e \cite{Poincare} sumed it up as follows:
\medskip

\begin{quote}
\emph{
L'énoncé du principe de moindre action a quelque chose de choquant pour l'esprit. Pour se rendre d'un point à un autre, une molécule matérielle, soustraite à l'action donnée de toute force, mais assujettie à se mouvoir sur une surface, prendra la ligne géodésique, c'est-à-dire le chemin le plus court. Cette molécule semble connaître le point où on veut la mener, prévoir le temps qu'elle mettra à l'atteindre en suivant tel ou tel chemin, et choisir ensuite le chemin le plus convenable. L'énoncé nous la présente pour ainsi dire comme un être animé et libre. Il est clair qu'il vaudrait mieux le remplacer par un énoncé moins choquant, et où, comme diraient les philosophes, les causes finales ne sembleraient pas se substituer aux causes efficientes 
\footnote{
The very enunciation of the principle of least action is objectionable. To move from one point to another, a material molecule, acted upon by no force, but compelled to move on a surface, will take as its path the geodesic line - i.e., the shortest path. This molecule seems to know the point to which we want to take it, to foresee the time that it will take it to reach it by such a path, and then to know how to choose the most convenient path. The enunciation of the principle presents it to us, so to speak, as a living and free entity. It is clear that it would be better to replace it by a less objectionable enunciation, one in which, as philosophers would say, final effects do not seem to be substituted for acting causes \cite{Poincare_en}. 
}.
}
\end{quote}

\medskip

\begin{flushright}
Henri Poincaré, \emph{La science et l'hypothèse}, 1902.
\end{flushright}

\medskip
 
Therefore, how is it possible to obtain causal equations, i.e. equations taking into account only the past states, by using a principle which depends on the whole temporal interval? What does the information on the future become? Moreover, several approaches have been developped \cite{Agrawal, Baleanu_Agrawal, Cresson, Riewe_96} to generalize the least action principle and the Euler-Lagrange equation to the fractional case. In this formalism, the derivatives are non-local, which makes the past and future of the functions appear explicitly. Because of the simultaneous presence of left and right derivatives, none of those equations respect the causality principle. This difficulty may have been seen as a definite failure, and alternatives have notably been proposed in \cite{Dreisigmeyer, Stan} to get around this problem. However, because we believe that the least action principle should remain fundamental in any formalism, we choose to keep this approach - particularly the one in \cite{Cresson} -, and we prove in this paper that causality may be respected.
To this purpose, we formulate the following remark: when one observe a phenomena which one wants to describe using a differential equation, one only have access to the left derivatives of the functions, i.e. to the differential operators dependent on the past values of the function. Thus, this characteritic represents a trace of the time arrow, and the differential equation becomes attached to the forward temporal direction.
If we assume the existence of a similar differential equation, but related to the backward temporal evolution, we obtain a formal approach of irreversibility. Difficulties about causality inherent to the least action principle can be solved using this formalism. More precisely, we show that using a new embedding, termed \emph{asymmetric embedding}, it is possible to obtain causal Euler-Lagrange equations. In doing so, we observe that the information on the future lies in fact in the variations used by this variational method. The virtual status of these could hence moderate the finalist aspect of the least action principle. In addition, we prove that the asymmetric embedding is coherent, i.e. that this procedure is globally compatible with the least action principle.

The formal approach on irreversibility is first adressed in section \ref{Section:Irrev}. The asymmetric embbeding is introduced in section \ref{Section:Plgt_op} and applied to the Lagrangian systems in section \ref{Section:Plgt_L}. This leads to a causal Euler-Lagrange equation, obtained in section \ref{Section:Caus_Coh}. Application of this formalism to few examples of derivatives is reported in section \ref{Section:Cas_part} while those results are discussed in section \ref{Section:Concl}.


\section{An approach of irreversibility} \label{Section:Irrev}

\subsection{Dynamics and causality}

In physics, the causality principle means that the state of a system at a time $t$ is completely determined by its past, i.e. by its states at times $t'$, $t' < t$. Therefore, if a system is described by variables $x \in \R^n$, the variations of those ones (for example the velocities) should only depend on the past instants. The derivatives, which express those variations, are hence \emph{left} derivatives. They will be denoted by $\mc{D}^+$. The following definition formalises these ideas.

\begin{definition}
The evolution of a system is said causal in the direction \emph{past $\rightarrow$ future} if it can be written as
\begin{equation} \label{eq_futur}
f_+ \left( x(t), \mc{D}^+ x(t), \ldots, (\mc{D}^+)^k x(t), t \right) = 0.
\end{equation} 
\end{definition}

A first example of operator $\mc{D}^+$ is the usual left derivative
\begin{equation} \label{dp}
d_+ x(t) = \lim_{
\begin{array}{c}
\scriptstyle \varepsilon \rightarrow 0 \\
\scriptstyle \varepsilon > 0
\end{array}}
\frac{x(t) - x(t - \varepsilon)}{\varepsilon}.
\end{equation} 

However, following \cite{Hilfer_RFT}, we postulate that the evolution of a general physical system is \emph{a priori} irreversible: the dynamics in the direction \emph{future $\rightarrow$ past} cannot be described by \eqref{eq_futur}. In order to have a complete description of the system, a supplementary differential equation has to be introduced which accounts for the evolution towards the past. The evolution operators are in this case right derivatives, denoted by $\mc{D}^-$, and lead to the following definition.

\begin{definition}
The evolution of a system is said causal in the direction \emph{future $\rightarrow$ past} if it can be written as
\begin{equation} \label{eq_passe}
f_- \left( x(t),\mc{D}^- x(t), \ldots, (\mc{D}^-)^l x(t), t \right) = 0.
\end{equation}
\end{definition}

For the previous example, the operator $\mc{D}^-$ is the usual right derivative
\begin{equation*}
d_- x(t) = \lim_{
\begin{array}{c}
\scriptstyle \varepsilon \rightarrow 0 \\
\scriptstyle \varepsilon > 0
\end{array}}
\frac{x(t + \varepsilon) - x(t)}{\varepsilon}.
\end{equation*}

Now we can precise our formal approach of irreversibility.
\begin{definition}
A system is said reversible if \eqref{eq_futur} and \eqref{eq_passe} have the same solutions. Otherwise, it is said irreversible.
\end{definition}

\begin{remark}
We emphasize the formal aspect of this definition: our goal here is not to understand the physical origin of irreversibility. This problem, related to Boltzmann's work on entropy in the 1870 decade, is still discussed in the physics community. We simply mention that recent answers have been proposed through chaotic systems \cite{Prigogine_99, Zasl_PT}. 
\end{remark}

Even if our approach is formal, it will be useful to understand better the least action principle, more precisely its relation with causality. 

\subsection{Asymmetric dynamical representation}

In the rest of the paper, we will consider a system evolving in $\R^n$, during a temporal interval $[a,b]$. This system will de denoted by $\mc{S}$.

We introduce the vector space $\mc{U}$ defined by
\begin{equation*}
\mc{U} = \{ x \in C^0([a,b],\R^n) \: | \: \mc{D}^\pm x \in C^0([a,b],\R^n) \}.
\end{equation*} 
In the rest of the paper, we will only consider trajectories which belong to $\mc{U}$.
Moreover, we suppose that $C^\infty_c([a,b],\R^n) \subset \mc{U}$, where $C^\infty_c([a,b],\R^n)$ is the set of $C^\infty$ functions with compact support in $[a,b]$.

First we precise the notion of dynamics in this approach.

\begin{definition} \label{definition:asy_dyn_rep}
The asymmetric dynamical representation of a system $\mc{S}$ is defined by the couple $(x_+, x_-) \in \R^n \times \R^n$ and by their respective temporal evolutions governed by the following differential equations
\begin{eqnarray*}
f_+(x_+(t), \mc{D}^+ x_+(t), \ldots, (\mc{D}^+)^k x_+(t), t) & = & 0, \\
f_-(x_-(t), \mc{D}^- x_-(t), \ldots, (\mc{D}^-)^l x_-(t), t) & = & 0.
\end{eqnarray*} 
The variable $x_+$ represents the evolution in the direction \emph{past $\rightarrow$ future}, and $x_-$ in the direction \emph{future $\rightarrow$ past}.
\end{definition}

The time arrow is hence characterised by two objects: the global structure of the differential equation, via $f_\pm$, and the temporal evolution operator, i.e. $\mc{D}^\pm$. Whereas the direction of evolution clearly appears in the second one, it is not the case for the other.  That is why we will now suppose that $f_+ = f_- = f$. In this case, the direction of the time arrow becomes exclusively determined by the choice of the derivative, and the equations of the dynamics are
\begin{eqnarray}
f(x_+(t), \mc{D}^+ x_+(t), \ldots, (\mc{D}^+)^k x_+(t), t) & = & 0 \label{fpb}, \\
f(x_-(t), \mc{D}^- x_-(t), \ldots, (\mc{D}^-)^k x_-(t), t) & = & 0 \label{fmb}.
\end{eqnarray} 
 
\begin{remark}
When $\mc{D}^\pm = d_\pm$, the trajectory is often considered differentiable, i.e. it verifies $d_+ x(t) = d_- x(t) = d /dt \: x(t)$. All information on the time arrow is therefore lost. 
\end{remark}

In order to deal with the two directions of evolution in a unified way, we introduce the following functional spaces
\begin{eqnarray*}
\mc{V} & = & \mc{U} \times \mc{U}, \\
\mc{V}^+ & = & \mc{U} \times \{ 0 \}, \\
\mc{V}^- & = & \{ 0 \} \times \mc{U}. \\
\end{eqnarray*} 

For $X = (x_+,x_-) \in \mc{V}$, we define the differential operator $\mc{D}$ by
\begin{equation*}
\mc{D} \, X = (\mc{D}^+ x_+, \mc{D}^- x_-).
\end{equation*} 

Consequently, for $k \in \N^*$, $\mc{D}^k \, X = \left( (\mc{D}^+)^k x_+, (\mc{D}^-)^k x_- \right)$.

This approach will now be integrated in the framework of the embedding theories (see \cite{Cresson, Cresson_Darses}).


\section{Asymmetric embedding of differential operators} \label{Section:Plgt_op}

The initial motivation for the framework presented above is to conciliate fractional least action principle and causality. In \cite{Agrawal, Baleanu_Agrawal, Cresson, Riewe_96}, the Euler-Lagrange equations stemming from a least action principle contain both the operators $\mc{D}^+$ and $\mc{D}^-$. They violate the causality principle, as it has been noticed in \cite{Dreisigmeyer}. For example, in \cite{Cresson}, 
the equations obtained lead to 
\begin{equation}
\partial_1 L(x(t), \mc{D}^+ x(t), t) - \mc{D}^- \partial_2 L(x(t), \mc{D}^+ x(t), t) = 0, \nonumber
\end{equation}
or to
\begin{equation}
\partial_1 L(x(t), \mc{D}^- x(t), t) - \mc{D}^+ \partial_2 L(x(t), \mc{D}^- x(t), t) = 0. \nonumber
\end{equation} 

In the framework of embedding theories, this simultaneous presence of the two operators is problematic for the coherence \cite{Cresson}: the embedding procedure and the least action principle are not cummutative. A solution proposed in \cite{Cresson} and exploited in \cite{Inizan_CBHF} consisted in restricting the space of variations used in the least action principle. However, despite being very strong, those constraints do not however lead to a unique solution.

A new embedding procedure, the \emph{asymmetric embedding}, is presented here, and solves those problems. Its validity is not restricted to the fractional case.

\bigskip

We begin with the differential operators introduced in \cite{Cresson}. 

For two vector spaces $A$ and $B$, we denote $\mc{F}(A,B)$ the vector space of the functions $f \, : \, A \rightarrow B$.
If $f \in \mc{F}(\R^{n(k+1)} \times \R, \R^m)$, we define an associated operator 
\begin{equation} \label{operateur}
F \, : \, y \in \mc{F}([a,b], \R^{n(k+1)}) \longmapsto f(y(\bullet),\bullet),
\end{equation}  
where $f(y(\bullet),\bullet)$ is defined by
\begin{equation}
f(y(\bullet),\bullet) \, : \, t \in [a,b] \longmapsto f(y(t),t). \nonumber
\end{equation} 

If $\textbf{f} = \{ f_i \}_{0 \leq i \leq p}$ and $\textbf{g} = \{ g_j \}_{1 \leq j \leq p}$ are two families of $\mc{F}(\R^{n(k+1)} \times \R, \R^m)$, we introduce the operator $\mc{O}_\textbf{f}^\textbf{g}$ defined by 
\begin{equation} \label{Ofg}
\mc{O}_\textbf{f}^\textbf{g} \, : \, x \in \mc{U} \longmapsto \left[ F_0 + \sum_{i=1}^p F_i \cdot \frac{d^i}{dt^i} \circ G_i \right] \left( x(\bullet), \ldots,  \frac{d^k}{dt^k} x(\bullet), \bullet \right),
\end{equation} 
where, for two operators $A = (A_1, \ldots, A_m)$ and $B = (B_1, \ldots, B_m)$, $A \cdot B$ is defined by
\begin{equation*}
(A \cdot B)(y) = \left( A_1(y) B_1(y), \ldots, A_m(y) B_m(y) \right).
\end{equation*} 

We extend now those operators to deal with both evolution directions.

\begin{definition}
With the previous notations, the asymmetric representation of operator $F$, denoted by $\tilde{F}$, is defined by
\begin{equation} \label{tilde}
\tilde{F} \, : \, (y_1, y_2) \in {\mc{F}\left([a,b], \R^{n(k+1)}\right)}^2 \longmapsto f(y_1(\bullet) + y_2(\bullet),\bullet).
\end{equation} 
\end{definition}

Let $\mc{M}_{m,2m}(\R)$ be the set of real matrices with $m$ rows and $2m$ columns. We note $I_m$ the identity matrix of dimension $m$, and we introduce the operator $\sigma$ defined by
\begin{equation}
\begin{array}{cccl}
\sigma \, : & \mc{V} & \longrightarrow & \quad \mc{M}_{m,2m}(\R) \\
               & X & \longmapsto & (I_m \quad 0) \quad \text{if } X \in \mc{V}^+ \backslash \{0 \}, \\ 
		&	&        & (0 \quad I_m) \quad \text{if } X \in \mc{V}^- \backslash \{0 \}, \\ 
		&	&        & (I_m \; I_m) \quad \text{otherwise}.
\end{array}
\end{equation} 

Now we can define the asymmetric embedding of an operator.

\begin{definition} \label{definition:plgt}
With the previous notations, the asymmetric embedding of operator \eqref{Ofg}, denoted by $\mc{E}(\mc{O}_\textbf{f}^\textbf{g})$, is defined by
\begin{equation} \label{plgt}
\mc{E}(\mc{O}_\textbf{f}^\textbf{g}) \, : \, X \in \mc{V} \longmapsto \left[ \tilde{F}_0 + \sigma(X) \sum_{i=1}^p 
\begin{pmatrix}
\tilde{F}_i \cdot (\mc{D}^+)^i \circ \tilde{G}_i \\
\tilde{F}_i \cdot (\mc{D}^-)^i \circ \tilde{G}_i
\end{pmatrix}
\right] \left( X(\bullet), \ldots,  \mc{D}^k X(\bullet), \bullet \right).
\end{equation} 
\end{definition}

In particular, for $(x_+,0) \in \mc{V}^+$, \eqref{plgt} becomes
\begin{equation*} 
\mc{E}(\mc{O}_\textbf{f}^\textbf{g})(x_+,0)(t) = \left[ F_0 + \sum_{i=1}^p F_i \cdot (\mc{D}^+)^i \circ G_i \right] \left( x_+(t), \ldots,  (\mc{D}^+)^k x_+(t), t \right),
\end{equation*}

and for $(0,x_-) \in \mc{V}^-$, we have
\begin{equation*}
\mc{E}(\mc{O}_\textbf{f}^\textbf{g})(0,x_-)(t) = \left[ F_0 + \sum_{i=1}^p F_i \cdot (\mc{D}^-)^i \circ G_i \right] \left( x_-(t), \ldots,  (\mc{D}^-)^k x_-(t), t \right).
\end{equation*}

\begin{example}

We set $n=m=p=1$, $k=2$.
Let $f_0, \, f_1, \, g_1 \, : \, \R^3 \times \R \longrightarrow \R$ be three functions defined by
\begin{eqnarray*}
f_0(a,b,c,t) & = & c + e^{-t} \cos b,  \\
f_1(a,b,c,t) & = & 1,  \\
g_1(a,b,c,t) & = & \cos a. 
\end{eqnarray*}

The associated operator $\mc{O}_\textbf{f}^\textbf{g}$ verifies 
\begin{equation*}
\mc{O}_\textbf{f}^\textbf{g}(x)(t) = \frac{d^2}{dt^2} x(t) + e^{-t} \cos \left( \frac{d}{dt} x(t) \right) + \frac{d}{dt} \cos(x(t)).
\end{equation*}

Its asymmetric embedding $\mc{E}(\mc{O}_\textbf{f}^\textbf{g})$ is defined for trajectories $(x_+,x_-) \in \mc{V}$, and is given by
\begin{align*}
\mc{E}(\mc{O}_\textbf{f}^\textbf{g})(x_+,x_-)(t) = (\mc{D}^+)^2 x_+(t) + (\mc{D}^-)^2 x_-(t) + e^{-t} \cos(\mc{D}^+ x_+(t) + \mc{D}^- x_-(t)) \\
+ \sigma(x_+,x_-) 
\begin{pmatrix}
\mc{D}^+ \cos(x_+(t) + x_-(t))  \\
\mc{D}^- \cos(x_+(t) + x_-(t))
\end{pmatrix}.
\end{align*} 

For $(x_+,0) \in \mc{V}^+$, the embedding becomes
\begin{equation*}
\mc{E}(\mc{O}_\textbf{f}^\textbf{g})(x_+,0)(t) = (\mc{D}^+)^2 x_+(t) + e^{-t} \cos(\mc{D}^+ x_+(t)) + \mc{D}^+ \cos(x_+(t)), 
\end{equation*}
and for $(0,x_-) \in \mc{V}^-$, we have
\begin{equation*}
\mc{E}(\mc{O}_\textbf{f}^\textbf{g})(0,x_-)(t) = (\mc{D}^-)^2 x_-(t) + e^{-t} \cos(\mc{D}^- x_-(t)) + \mc{D}^- \cos(x_-(t)).
\end{equation*}

\end{example}

The ordinary differential equations may be written by using operators $\mc{O}_\textbf{f}^\textbf{g}$. Following \cite{Cresson}, we consider the differential equations of the form 
\begin{equation} \label{eq_diff}
\mc{O}_\textbf{f}^\textbf{g}(x) = 0, \quad x \in \mc{U}.
\end{equation} 

\begin{definition} \label{definition:eq_plgt}
With the previous notations, the asymmetric embedding of differential equation \eqref{eq_diff} is defined by 
\begin{equation} \label{eq_plgt}
\mc{E}(\mc{O}_\textbf{f}^\textbf{g})(X) = 0, \quad X \in \mc{V}.
\end{equation} 
\end{definition}

Consequently, if $(x_+,0) \in \mc{V}^+$, \eqref{eq_plgt} becomes
\begin{equation*}
\left[ F_0 + \sum_{i=1}^p F_i \cdot (\mc{D}^+)^i \circ G_i \right] \left( x_+(t), \ldots,  (\mc{D}^+)^k x_+(t), t \right) = 0,
\end{equation*}
and plays the part of \eqref{fpb}.

Similarly, for $(0,x_-) \in \mc{V}^-$, we obtain 
\begin{equation*}
\left[ F_0 + \sum_{i=1}^p F_i \cdot (\mc{D}^-)^i \circ G_i \right] \left( x_-(t), \ldots,  (\mc{D}^-)^k x_-(t), t \right) = 0,
\end{equation*}
which may be related to \eqref{fmb}.

This method is now applied to the Lagrangian systems.


\section{Asymmetric embedding of Lagrangian systems} \label{Section:Plgt_L}

We consider the same system $\mc{S}$ as above, but we suppose now that it admits a differentiable Lagrangian $L \in \mc{F}(\R^{2n} \times \R, \R)$. 
In the rest of this paper, the Lagrangian $L$ and its associated operator \eqref{operateur} will be identified.
For such systems, the least action principle stipulates that the extrema of the action provide the equation of the dynamics, called Euler-Lagrange equation. 

As we show later on, the causality problem lies in the integration by parts which appears in the calculus of variations.
From now on, we suppose that the operators $\mc{D}^+$ et $\mc{D}^-$ verify
\begin{equation} \label{IPP}
\int_a^b \mc{D}^+ f(t) \, g(t) \, dt = - \int_a^b f(t) \, \mc{D}^- g(t) \, dt + R_{ab}(f,g),
\end{equation} 
where $f, \, g \in \mc{U}$, and where $R_{ab}(f,g)$ contains the evaluations of $f$ and $g$ (and possibly their derivatives), at points $a$ and $b$. 

For example, if $\mc{D}^+ = \mc{D}^- = d/dt$, $R_{ab}(f,g) = f(b) g(b) - f(a) g(a)$.

Concerning the reference dynamics linked to the operator $d/dt$, the action associated to $L$, denoted by $\mc{A}(L)$, is defined by 
\begin{equation} \label{action}
\begin{array}{cccl}
\mc{A}(L) \, : & C^1([a,b],\R^n) & \longrightarrow & \quad \R  \\
               &   x    & \longmapsto & \displaystyle{\int_a^b} L \left( x(t), \frac{d}{dt} x(t), t \right) \, dt,
\end{array}
\end{equation}

Its extrema provide the Euler-Lagrange equation \cite{Arnold}
\begin{equation} \label{EL}
\partial_1 L(x(t), \frac{d}{dt} x(t), t) - \frac{d}{dt} \partial_2 L(x(t), \frac{d}{dt} x(t), t) = 0.
\end{equation}

\begin{remark}
For Lagrangian systems, the only Lagrangian determines the global structure of the equations. The dynamics is then completely fixed by the choice of the temporal evolution operator. Therefore the assumption $f_+ = f_-$ in \eqref{fpb} and \eqref{fmb} seems justified for those systems.
\end{remark}

This formalism will now be fit to the approach on irreversibility by using the asymmetric embedding.

\subsection{Asymmetric embedding of the Euler-Lagrange equation}

For $X=(x_1,x_2), Y=(y_1,y_2) \in \R^{2n}, \, t \in \R$, the asymmetric representation of the operator $L$, denoted by $\tilde{L}$, verifies 
\begin{equation*}
\tilde{L}(X, Y, t) = L(x_1 + x_2, y_1 + y_2, t).
\end{equation*} 

A first idea consists in embedding \eqref{EL} directly. Given that
\begin{equation*}
\Dpar{L}{x_1}(x_1 + x_2, y_1 + y_2, t) = \Dpar{L}{x_2}(x_1 + x_2, y_1 + y_2, t) = \partial_1 L(x_1 + x_2, y_1 + y_2, t),
\end{equation*} 

we note $ \partial_1 \tilde{L}(X,Y,t) = \partial_1 L(x_1 + x_2, y_1 + y_2, t)$. Similarly, we note $\partial_2 \tilde{L}(X,Y,t) = \partial_2 L(x_1 + x_2, y_1 + y_2, t)$. 

\begin{theorem}
For $X \in \mc{V}$, the asymmetric embedding of \eqref{EL} is defined by
\begin{equation} \label{EL_plgt}
\partial_1 \tilde{L}(X(t), \mc{D} X(t), t) + \sigma(X) 
\begin{pmatrix}
\mc{D}^+ \partial_2 \tilde{L}(X(t), \mc{D} X(t), t)  \\
\mc{D}^- \partial_2 \tilde{L}(X(t), \mc{D} X(t), t)
\end{pmatrix}
=
0
\end{equation} 

In particular, for $(x_+,0) \in \mc{V}^+$, \eqref{EL_plgt} becomes
\begin{equation} \label{EL_plgt_p}
\partial_1 L(x_+(t), \mc{D}^+ x_+(t), t) - \mc{D}^+ \partial_2 L(x_+(t), \mc{D}^+ x_+(t), t) = 0,
\end{equation} 

and for $(0,x_-) \in \mc{V}^-$, 
\begin{equation} \label{EL_plgt_m}
\partial_1 L(x_-(t), \mc{D}^- x_-(t), t) - \mc{D}^- \partial_2 L(x_-(t), \mc{D}^- x_-(t), t) = 0.
\end{equation} 

\end{theorem}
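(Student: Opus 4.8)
The plan is to recognise \eqref{EL} as a special instance of the differential equation \eqref{eq_diff} and then apply Definition \ref{definition:plgt} mechanically; the whole content of the theorem is the verification that the resulting operator takes the announced form and restricts correctly to $\mc{V}^\pm$.

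First I would write the Euler-Lagrange equation \eqref{EL} in the canonical form $\mc{O}_\textbf{f}^\textbf{g}(x)=0$ of \eqref{eq_diff}. Comparing \eqref{EL} with \eqref{Ofg}, the natural choice is $k=1$, $p=1$, $m=n$, together with the families $f_0 = \partial_1 L$, $g_1 = \partial_2 L$ and the constant $f_1 = -I_n$, the minus sign of \eqref{EL} being carried here. Then $F_0 = \partial_1 L$ and $G_1 = \partial_2 L$ as operators via \eqref{operateur}, and $F_1 \cdot \frac{d}{dt} \circ G_1 = -\frac{d}{dt}\partial_2 L$, so that \eqref{Ofg} reproduces exactly the left-hand side of \eqref{EL}.

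Next I would feed this operator into \eqref{plgt}. The asymmetric representations are read off from \eqref{tilde}: since $f_1$ is constant, $\tilde{F}_1 = -I_n$, while $\tilde{F}_0 = \partial_1 \tilde{L}$ and $\tilde{G}_1 = \partial_2 \tilde{L}$, with the notation $\partial_i \tilde{L}(X,Y,t) = \partial_i L(x_1+x_2, y_1+y_2, t)$ fixed just before the statement. Substituting these into \eqref{plgt} with a single term in the sum ($p=1$, $i=1$) yields
\[
\mc{E}(\mc{O}_\textbf{f}^\textbf{g})(X)(t) = \partial_1 \tilde{L}(X(t),\mc{D} X(t),t) + \sigma(X)\begin{pmatrix} -\,\mc{D}^+ \partial_2 \tilde{L}(X(t),\mc{D} X(t),t) \\ -\,\mc{D}^- \partial_2 \tilde{L}(X(t),\mc{D} X(t),t) \end{pmatrix},
\]
which, set to zero per Definition \ref{definition:eq_plgt}, is precisely \eqref{EL_plgt}. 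The one delicate point is the bookkeeping of this overall minus sign: its placement (in $\tilde{F}_1$, in $\tilde{G}_1$, or in front of $\sigma$) is a matter of convention and must be pinned down by the two projections below.

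Finally I would verify the two special cases, which double as the consistency check. For $(x_+,0)\in\mc{V}^+\setminus\{0\}$ one has $\sigma(X) = (I_n\ 0)$, so the matrix product collapses to its first row; moreover $\tilde{L}(x_+,0,\cdot) = L(x_+,\cdot)$ and $\mc{D} X = (\mc{D}^+ x_+, 0)$, whence $\partial_i \tilde{L}(X,\mc{D} X,t)$ reduces to $\partial_i L(x_+,\mc{D}^+ x_+,t)$; this recovers \eqref{EL_plgt_p}. The case $(0,x_-)\in\mc{V}^-\setminus\{0\}$ is symmetric, with $\sigma(X) = (0\ I_n)$ selecting the second row and giving \eqref{EL_plgt_m}. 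I expect no genuine obstacle here beyond the sign bookkeeping and the routine collapse of the $\tilde{\,\cdot\,}$ operators on $\mc{V}^\pm$; the theorem is essentially a direct unwinding of Definitions \ref{definition:plgt} and \ref{definition:eq_plgt} applied to the operator form of \eqref{EL}.
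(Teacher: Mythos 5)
Your proof follows exactly the paper's own route: the paper likewise writes \eqref{EL} in the form \eqref{eq_diff} with $k=1$, $p=1$, $\textbf{f} = \{\partial_1 L, 1\}$, $\textbf{g} = \{\partial_2 L\}$ and concludes by unwinding Definitions \ref{definition:plgt} and \ref{definition:eq_plgt}, so your argument is essentially identical. Your extra care with the sign (taking $f_1 = -1$, i.e.\ the constant function with value $-1$ in each of the $m=n$ components, rather than the paper's stated $1$) is in fact the correct bookkeeping: only with the minus sign carried in $f_1$ (equivalently in $g_1$) do the projections to $\mc{V}^+$ and $\mc{V}^-$ reproduce \eqref{EL_plgt_p} and \eqref{EL_plgt_m} with their stated signs, which is precisely the consistency check you performed.
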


\begin{proof}
Equation \eqref{EL} may be written like \eqref{eq_diff} with $k=1$, $p=1$, $\textbf{f} = \{ \partial_1 L, 1\}$ and $\textbf{g} = \{ \partial_2 L\}$. We conclude by using definitions \ref{definition:plgt} and \ref{definition:eq_plgt}.
\end{proof}

Those two equations may play the part of \eqref{fpb} and \eqref{fmb} for the system $\mc{S}$. A possible asymmetric dynamical representation of $\mc{S}$ can hence be given by \eqref{EL_plgt_p}-\eqref{EL_plgt_m}.

\subsection{Calculus of the asymmetric variations}

However, it is also possible to embed the Lagrangian $L$ itself instead of \eqref{EL}. The least action may then be applied to the embedded action. 
Indeed, the Lagrangian $L$ may be written as \eqref{Ofg}, with $p=0$ and $\textbf{f}= \{ \tilde{L} \}$. Its asymmetric embedding, which will be noted $\hat{L}$, is defined by
\begin{equation}
\hat{L} \, : \, X \in \mc{V} \longmapsto \tilde{L}(X(\bullet), \mc{D} X(\bullet), \bullet).
\end{equation} 

The associated action \eqref{action} is now given by
\begin{equation} \label{action_plgt}
\begin{array}{cccl}
\mc{A}(\hat{L}) \, : & \mc{V} & \longrightarrow & \quad \R  \\
               &   X    & \longmapsto & \displaystyle{\int_a^b} \tilde{L} \left( X(t), \mc{D} X(t), t \right) \, dt.
\end{array}
\end{equation}

Similarly to the classical least action principle \cite{Arnold}, the equation of motion will be characterized through the extremal of \eqref{action_plgt}.
We need here to precise the notion of extremum.

\medskip

Let $A$ be a vector space, $B$ a subspace of $A$, and $f \, : \, A \rightarrow \R$ a functional. Let $x \in A$. 

\begin{definition}
The functional $f$ has a $B$-minimum (respectively $B$-maximum) point at $x$ if for all $h \in B$, $f(x+h) \geq f(x)$ (respectively $f(x+h) \leq f(x)$).
The functional $f$ has a $B$-extremum point at $x$ if it has a $B$-minimum point or a $B$-maximum point at $x$.
\end{definition}

In the differentiable case, the classical necessary condition remains with this definition.

\begin{lemma} \label{lemma:CN}
We suppose that $f$ is differentiable. If $f$ has a $B$-extremum point at $x \in A$, then for all $h \in B$, $df(x)(h)=0$, where $df(x)$ is the differential of $f$ at $x$.
In this case, $x$ is called a $B$-extremal of $f$.
\end{lemma}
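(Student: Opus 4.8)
The plan is to reduce the statement to the elementary one-variable fact that a differentiable function of a single real variable has vanishing derivative at an interior extremum. First I would assume, without loss of generality, that $f$ has a $B$-minimum point at $x$; the $B$-maximum case follows by applying the same argument to $-f$, which is again differentiable with $d(-f)(x) = -df(x)$.

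Next I would fix an arbitrary $h \in B$ and introduce the auxiliary function $\phi : \R \rightarrow \R$ given by $\phi(t) = f(x + th)$. The crucial structural point is that $B$ is a \emph{subspace}: for every $t \in \R$ we have $th \in B$, so the $B$-minimum hypothesis applies to the increment $th$ and yields $f(x + th) \geq f(x)$, that is $\phi(t) \geq \phi(0)$ for all $t \in \R$. Hence $\phi$ attains a global, and in particular interior, minimum at $t = 0$.

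It then remains to transfer the differentiability of $f$ to $\phi$. By the definition of the differential, $t \mapsto x + th$ is the line through $x$ in direction $h$, and composing with the differentiable map $f$ shows that $\phi$ is differentiable at $0$ with $\phi'(0) = df(x)(h)$. Since $\phi$ is a real differentiable function possessing an interior minimum at $0$, Fermat's rule gives $\phi'(0) = 0$, whence $df(x)(h) = 0$. As $h \in B$ was arbitrary, this holds for every $h \in B$, which is exactly the assertion.

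The computation is essentially routine; the only point requiring genuine care, and the reason the definition is stated for a \emph{subspace} $B$ rather than an arbitrary subset, is that one needs $th \in B$ throughout a full neighbourhood of $t = 0$ so as to place $0$ in the interior of the set where $\phi \geq \phi(0)$, making the first-order condition $\phi'(0) = 0$ available. Were $B$ merely a convex subset or a cone, the same reasoning would only produce the variational inequality $df(x)(h) \geq 0$ rather than the equality, so the subspace hypothesis is precisely what upgrades the argument to the stated necessary condition.
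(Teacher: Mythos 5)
Your proof is correct and takes essentially the same route as the paper, whose one-line argument considers the function $f_h : t \mapsto f(x+th)$, observes that it has an extremum at $0$, and concludes $f_h'(0) = df(x)(h) = 0$. Your version merely makes explicit the details the paper leaves implicit --- the reduction to the minimum case, the use of the subspace hypothesis to guarantee $th \in B$ for all $t$ so that $0$ is an interior extremum, and the appeal to Fermat's rule.
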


\begin{proof}
Let $h \in B$. The differentiable function $f_h \, : \, t \mapsto f(x+th)$ has an extremum point in $0$. Therefore $f_h'(0) = df(x)(h) = 0$.
\end{proof}

Now we introduce the space of variations
\begin{equation}
\mc{H} = \left\{ (h_+,h_-) \in \mc{V} \; | \; \forall f \in \mc{U}, \, R_{ab}(h_+,f) = R_{ab}(f,h_-) = 0 \right\}.
\end{equation} 

Without further assumptions, we obtain the following result, derived from \cite[theorem 3.11]{Avez}:

\begin{theorem} \label{theorem:EL_gnl}
Let $X \in \mc{V}$. We suppose that $t \mapsto \partial_2 \tilde{L} (X(t), \mc{D} X(t), t) \in \mc{U}$. Then we have the following equivalence:

the function $X \in \mc{V}$ is a $\mc{H}$-extremal of the action $\mc{A}(\hat{L})$ if and only if it verifies 
\begin{eqnarray} 
\partial_1 \tilde{L}(X(t), \mc{D} X(t), t) - \mc{D}^- \partial_2 \tilde{L} (X(t), \mc{D} X(t), t) & = & 0, \label{EL_gnle1} \\
\partial_1 \tilde{L}(X(t), \mc{D} X(t), t) - \mc{D}^+ \partial_2 \tilde{L} (X(t), \mc{D} X(t), t) & = & 0, \label{EL_gnle2}
\end{eqnarray}
for all $t \in [a,b]$.

\end{theorem}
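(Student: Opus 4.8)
The plan is to apply the necessary condition of Lemma \ref{lemma:CN}: since the action $\mc{A}(\hat{L})$ is differentiable, $X$ is a $\mc{H}$-extremal if and only if $d\mc{A}(\hat{L})(X)(h)=0$ for every variation $h=(h_+,h_-)\in\mc{H}$. The whole proof thus reduces to computing this differential, transforming it by the integration-by-parts rule \eqref{IPP} so that no derivative acts on $h$, and then reading off the vanishing condition through the fundamental lemma of the calculus of variations, which is the role played by \cite[theorem 3.11]{Avez}.

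First I would compute the differential. Writing $X=(x_+,x_-)$, using $\tilde{L}(X,Y,t)=L(x_1+x_2,y_1+y_2,t)$ together with $\mc{D}(X+th)=(\mc{D}^+x_++t\mc{D}^+h_+,\,\mc{D}^-x_-+t\mc{D}^-h_-)$, differentiation under the integral sign at $t=0$ (justified by the differentiability of $L$ and the continuity of all arguments in $\mc{U}$) gives
\begin{equation*}
d\mc{A}(\hat{L})(X)(h)=\int_a^b\left[\partial_1\tilde{L}\cdot(h_++h_-)+\partial_2\tilde{L}\cdot(\mc{D}^+h_++\mc{D}^-h_-)\right]dt,
\end{equation*}
where $\partial_1\tilde{L}$ and $\partial_2\tilde{L}$ are evaluated at $(X(t),\mc{D}X(t),t)$ and the coincidence $\partial_{x_1}\tilde L=\partial_{x_2}\tilde L=\partial_1\tilde L$ produces the factor $(h_++h_-)$.

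Next I would integrate by parts the two derivative terms. For the term in $\mc{D}^+h_+$ I apply \eqref{IPP} with $f=h_+$, $g=\partial_2\tilde{L}$, obtaining $\int_a^b\partial_2\tilde{L}\cdot\mc{D}^+h_+\,dt=-\int_a^b\mc{D}^-\partial_2\tilde{L}\cdot h_+\,dt+R_{ab}(h_+,\partial_2\tilde{L})$; for the term in $\mc{D}^-h_-$ I apply it with $f=\partial_2\tilde{L}$, $g=h_-$, obtaining $\int_a^b\partial_2\tilde{L}\cdot\mc{D}^-h_-\,dt=-\int_a^b\mc{D}^+\partial_2\tilde{L}\cdot h_-\,dt+R_{ab}(\partial_2\tilde{L},h_-)$. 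The standing hypothesis $t\mapsto\partial_2\tilde{L}(X(t),\mc{D}X(t),t)\in\mc{U}$ lets me invoke the defining conditions of $\mc{H}$ with $f=\partial_2\tilde{L}$, so both remainders vanish. Collecting terms yields
\begin{equation*}
d\mc{A}(\hat{L})(X)(h)=\int_a^b\left[(\partial_1\tilde{L}-\mc{D}^-\partial_2\tilde{L})\cdot h_++(\partial_1\tilde{L}-\mc{D}^+\partial_2\tilde{L})\cdot h_-\right]dt.
\end{equation*}

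Finally I would decouple and apply the fundamental lemma. The constraints defining $\mc{H}$ separate into independent conditions on $h_+$ and on $h_-$, so $\mc{H}$ is a product and I may take $h_-=0$ with $h_+$ arbitrary, then $h_+=0$ with $h_-$ arbitrary; this isolates the two integral conditions against $h_+$ and $h_-$ respectively. Any $h\in C^\infty_c([a,b],\R^n)$ vanishes with all its derivatives at $a$ and $b$, hence kills the boundary evaluations contained in $R_{ab}$ and lies in the admissible variation spaces, and since $C^\infty_c([a,b],\R^n)\subset\mc{U}$ these spaces are rich enough for the fundamental lemma to force the two continuous integrands to vanish identically, giving \eqref{EL_gnle1}–\eqref{EL_gnle2}; the converse is immediate since these equations make the last integrand vanish for every $h$. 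I expect the genuine difficulty to lie exactly in this last step: guaranteeing that the variation spaces cut out by $\mc{H}$ are large enough to apply the fundamental lemma (the precise purpose of the boundary bookkeeping in \eqref{IPP} and of the inclusion $C^\infty_c\subset\mc{U}$) and that $\partial_1\tilde{L}-\mc{D}^\pm\partial_2\tilde{L}$ may legitimately be treated as continuous functions, which is what the hypothesis $\partial_2\tilde{L}\in\mc{U}$ secures.
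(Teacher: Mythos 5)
Your proposal is correct and follows essentially the same route as the paper: invoke Lemma \ref{lemma:CN}, expand $\mc{A}(\hat{L})(X+H)$ to get the differential, apply \eqref{IPP} with the remainders $R_{ab}$ killed by the defining property of $\mc{H}$ (using the hypothesis $t \mapsto \partial_2 \tilde{L}(X(t),\mc{D}X(t),t) \in \mc{U}$), and conclude via the fundamental lemma of the calculus of variations (the paper cites \cite[theorem 1.2.4]{Hormander}) using $C^\infty_c([a,b],\R^n) \subset \mc{U}$ and the continuity of $\partial_1 \tilde{L} - \mc{D}^\pm \partial_2 \tilde{L}$. Your only additions---making explicit the product structure of $\mc{H}$ so the conditions on $h_+$ and $h_-$ decouple, and checking that compactly supported smooth variations annihilate $R_{ab}$---are points the paper leaves implicit, and they are handled correctly.
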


\begin{proof}
The Lagrangian $L$ being differentiable, $\mc{A}(\hat{L})$ is also differentiable. From lemma \ref{lemma:CN}, $X=(x_+,x_-) \in \mc{V}$ is a $\mc{H}$-extremal of $\mc{A}(\hat{L})$ if and only if for all $H \in \mc{H}$, $d\mc{A}(\hat{L})(X)(H) = 0$. For $H=(h_+,h_-) \in \mc{H}$, we have
\begin{equation*}
\begin{array}{l}
\mc{A}(\hat{L})(X+H)  = \displaystyle{\int_a^b} \tilde{L}( (X+H)(t), \mc{D} (X+H)(t), t) \, dt, \\
	\qquad = \displaystyle{\int_a^b} L((x_+ + x_- + h_+ + h_-)(t), \mc{D}^+ (x_+ + h_+)(t) + \mc{D}^- (x_- + h_-)(t),t) \, dt, \\
	\qquad   = \mc{A}(\hat{L})(X) + \displaystyle{\int_a^b} \partial_1 \tilde{L}(X(t), \mc{D} X(t),t) \, (h_+(t) + h_-(t)) \, dt \\
	\qquad   \qquad + \displaystyle{\int_a^b} \partial_2 \tilde{L}(X(t), \mc{D} X(t),t) \, (\mc{D}^+ h_+(t) + \mc{D}^- h_-(t)) \, dt + o(H), \\
	\qquad  =  \mc{A}(\hat{L})(X) + \displaystyle{\int_a^b} \left[ \partial_1 \tilde{L}(X(t), \mc{D} X(t),t) \, h_+(t) + \partial_2 \tilde{L}(X(t), \mc{D} X(t),t) \, \mc{D}^+ h_+(t) \right] dt \\
	\qquad  \qquad + \displaystyle{\int_a^b} \left[ \partial_1 \tilde{L}(X(t), \mc{D} X(t),t) \, h_-(t) + \partial_2 \tilde{L}(X(t), \mc{D} X(t),t) \, \mc{D}^- h_-(t) \right] dt + o(H). \\
\end{array}
\end{equation*} 

Consequently,
\begin{multline*}
d\mc{A}(\hat{L})(X)(H) = \int_a^b \left[ \partial_1 \tilde{L}(X(t), \mc{D} X(t),t) \, h_+(t) + \partial_2 \tilde{L}(X(t), \mc{D} X(t),t) \, \mc{D}^+ h_+(t) \right] dt \\
	+ \int_a^b \left[ \partial_1 \tilde{L}(X(t), \mc{D} X(t),t) \, h_-(t) + \partial_2 \tilde{L}(X(t), \mc{D} X(t),t) \, \mc{D}^- h_-(t) \right] dt.
\end{multline*} 

Given that $t \mapsto \partial_2 \tilde{L} (X(t), \mc{D} X(t), t) \in \mc{U}$, we can use \eqref{IPP}. No supplementary term appears because $H \in \mc{H}$.
Therefore the differential is given by
\begin{multline*}
d\mc{A}(\hat{L})(X)(H) = \int_a^b \left[ \partial_1 \tilde{L}(X(t), \mc{D} X(t),t)- \mc{D}^- \partial_2 \tilde{L}(X(t), \mc{D} X(t),t) \,  \right] h_+(t) \, dt \\
	+ \int_a^b \left[ \partial_1 \tilde{L}(X(t), \mc{D} X(t),t) - \mc{D}^+ \partial_2 \tilde{L}(X(t), \mc{D} X(t),t) \, \mc{D}^- \right] h_-(t) \, dt.
\end{multline*} 

Given that $X \in \mc{U}$, we have $t \mapsto \partial_1 \tilde{L}(X(t), \mc{D} X(t),t) \in C^0([a,b],\R^n)$. 
Consequently, $t \mapsto \partial_1 \tilde{L}(X(t), \mc{D} X(t),t) - \mc{D}^\pm \partial_2 \tilde{L}(X(t), \mc{D} X(t),t) \in C^0([a,b],\R^n)$.
The space $\mc{U}$ contains $C^\infty_c([a,b],\R^n)$, so we can apply \cite[theorem 1.2.4]{Hormander}: $d\mc{A}(\hat{L})(X)(H) = 0$ for all $H \in \mc{H}$ if and only if \eqref{EL_gnle1} and \eqref{EL_gnle2} are satisfied.
\end{proof}

When we look at the evolution towards future, i.e. at the evolution of $x_+$, we see that $(x_+,0)$ is a $\mc{H}$-extremal of the action if and only if it verifies
\begin{eqnarray} 
\partial_1 L(x_+(t), \mc{D}^+ x_+(t),t) - \mc{D}^- \partial_2 L (x_+(t), \mc{D}^+ x_+(t),t) & = & 0, \label{EL_gnle_pm} \\
\partial_1 L(x_+(t), \mc{D}^+ x_+(t),t) - \mc{D}^+ \partial_2 L (x_+(t), \mc{D}^+ x_+(t),t) & = & 0. \nonumber
\end{eqnarray} 

Equation \eqref{EL_gnle_pm} does not respect causality because of the operator $\mc{D}^-$. The trajectory $x_+$ should hence verify two different equations, whereas only the second seems acceptable from a physical point of view. The same problem arises for the evolution towards past.
It will now be shown that it is possible to overcome those difficulties by restricting the variations.


\section{Causality and coherence} \label{Section:Caus_Coh}

From a physical point of view, the elements of $\mc{V}$ are meaningless. Only their restrictions to $\mc{V}^+$ and $\mc{V}^-$ are relevant. The same remark applies to $\mc{H}$. Consequently, the space of variations has to be questioned. In \cite{Inizan_CBHF}, we propose to restrict the variations $h$ by assuming $\mc{D}^+ h = \mc{D}^- h$. Unfortunately, this hypothesis seems very strong, and may not be related to the dynamics. Moreover, a supplementary term appears in the Euler-Lagrange equation. In this paper we also restrict the variations but such problems will not arise.

If we study the evolution towards future (we would proceed likewise for the other direction), it would seem natural to only consider the variations which belong also to $\mc{V}^+$. Therefore we introduce a new space of variations $\mc{H}^+ = \mc{H} \cap \mc{V}^+$. Similarly, we set $\mc{H}^- = \mc{H} \cap \mc{V}^-$. From theorem \ref{theorem:EL_gnl}, we deduce the following result.

\begin{corollary}
Let $(x_+,0) \in \mc{V}^+$. We suppose that $t \mapsto \partial_2 L (x_+(t), \mc{D}^+ x_+(t), t) \in \mc{U}$. Then we have the following equivalence:

the function $(x_+,0)$ is a $\mc{H}^+$-extremal of the action $\mc{A}(\hat{L})$ if and only if $x_+$ verifies 
\begin{equation} \label{EL_p_anti}
\partial_1 L(x_+(t), \mc{D}^+ x_+(t), t) - \mc{D}^- \partial_2 L(x_+(t), \mc{D}^+ x_+(t), t) = 0,
\end{equation}
for all $t \in [a,b]$.
\end{corollary}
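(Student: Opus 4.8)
The plan is to read this corollary off Theorem \ref{theorem:EL_gnl} by specialising its variational computation to $\mc{V}^+$ and $\mc{H}^+$, so no fresh differentiation is required. First I would invoke Lemma \ref{lemma:CN}: since $L$ is differentiable the action $\mc{A}(\hat{L})$ is differentiable, hence $(x_+,0)$ is a $\mc{H}^+$-extremal if and only if $d\mc{A}(\hat{L})(x_+,0)(H)=0$ for every $H\in\mc{H}^+=\mc{H}\cap\mc{V}^+$. The decisive simplification is that every such $H$ is of the form $H=(h_+,0)$, so its backward component vanishes identically.

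Next I would reuse the explicit differential established inside the proof of Theorem \ref{theorem:EL_gnl}. On $\mc{V}^+$ the asymmetric representation collapses onto the genuine Lagrangian: with $X=(x_+,0)$ one has $\mc{D}X(t)=(\mc{D}^+x_+(t),0)$, and by the convention $\partial_i\tilde{L}(X,Y,t)=\partial_i L(x_1+x_2,y_1+y_2,t)$ this gives $\partial_i\tilde{L}(X(t),\mc{D}X(t),t)=\partial_i L(x_+(t),\mc{D}^+x_+(t),t)$ for $i=1,2$. Setting $h_-=0$ annihilates the entire second integral in that differential, leaving
\begin{equation*}
d\mc{A}(\hat{L})(x_+,0)(h_+,0) = \int_a^b \left[ \partial_1 L(x_+(t), \mc{D}^+ x_+(t),t) - \mc{D}^- \partial_2 L(x_+(t), \mc{D}^+ x_+(t),t) \right] h_+(t) \, dt,
\end{equation*}
where the integration by parts \eqref{IPP} has already transferred the derivative onto $\partial_2 L$ and turned $\mc{D}^+$ into $\mc{D}^-$, and where the boundary contribution $R_{ab}$ is absent precisely because $H\in\mc{H}$.

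Finally I would close with the fundamental lemma of the calculus of variations, exactly as in the parent theorem. The bracket is continuous on $[a,b]$ thanks to the hypothesis $t\mapsto\partial_2 L(x_+(t),\mc{D}^+x_+(t),t)\in\mc{U}$, and one checks that $C^\infty_c([a,b],\R^n)\times\{0\}\subset\mc{H}^+$, since compactly supported variations kill every boundary term $R_{ab}(h_+,f)$; this provides a rich enough test space to invoke \cite[theorem 1.2.4]{Hormander}. The integral then vanishes for all admissible $h_+$ if and only if the bracket vanishes identically, which is exactly \eqref{EL_p_anti}.

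The computation is routine once Theorem \ref{theorem:EL_gnl} is available; the point worth emphasising is structural rather than technical, and it is what I would be careful not to obscure. Passing from $\mc{H}$ to $\mc{H}^+$ suppresses one of the two Euler-Lagrange equations, and the survivor is the $\mc{D}^-$ equation, i.e. the \emph{non-causal} one, even though the trajectory $(x_+,0)$ and its variations now carry the \emph{same} time arrow. This is precisely the diagnosis that motivates the opposite-arrow choice $\mc{H}^-$ used afterwards, so I would present the conclusion as \eqref{EL_p_anti} with its $\mc{D}^-$ intact rather than expecting the causal operator $\mc{D}^+$ to appear here.
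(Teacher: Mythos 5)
Your proposal is correct and follows exactly the route the paper intends: the corollary is deduced by specialising the differential computed in the proof of Theorem \ref{theorem:EL_gnl} to variations $H=(h_+,0)\in\mc{H}^+$, which kills the second integral, reduces $\partial_i\tilde{L}$ to $\partial_i L(x_+,\mc{D}^+x_+,t)$ on $\mc{V}^+$, and closes with \cite[theorem 1.2.4]{Hormander} via $C^\infty_c([a,b],\R^n)\subset\mc{U}$, yielding \eqref{EL_p_anti} with the non-causal $\mc{D}^-$ intact. Your closing remark on why the surviving equation is the problematic one matches the paper's own discussion motivating the switch to $\mc{H}^-$.
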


We obtain a single equation, which is precisely \eqref{EL_gnle_pm}, the problematic one. Consequently, the variations (in $\mc{H}$) and the trajectories (in $\mc{V}$) cannot have the same status.

On the contrary, if the trajectories are chosen in $\mc{H}^-$, the problem is solved.

\begin{theorem}
Let $(x_+,0) \in \mc{V}^+$. We suppose that $t \mapsto \partial_2 L (x_+(t), \mc{D}^+ x_+(t), t) \in \mc{U}$. Then we have the following equivalence:

the function $(x_+,0)$ is a $\mc{H}^-$-extremal of the action $\mc{A}(\hat{L})$ if and only if $x_+$ verifies 
\begin{equation} \label{EL_p}
\partial_1 L(x_+(t), \mc{D}^+ x_+(t), t) - \mc{D}^+ \partial_2 L(x_+(t), \mc{D}^+ x_+(t), t) = 0,
\end{equation}
for all $t \in [a,b]$.
\end{theorem}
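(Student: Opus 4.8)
The plan is to follow exactly the pattern of Theorem~\ref{theorem:EL_gnl} and its preceding corollary, exploiting the fact that the first-variation formula has already been computed there. By Lemma~\ref{lemma:CN}, since $\mc{A}(\hat{L})$ is differentiable, $(x_+,0)$ is a $\mc{H}^-$-extremal if and only if $d\mc{A}(\hat{L})(x_+,0)(H) = 0$ for every $H \in \mc{H}^-$. Every such $H$ has the form $(0,h_-)$ with $h_- \in \mc{U}$ satisfying $R_{ab}(f,h_-)=0$ for all $f \in \mc{U}$. First I would substitute $X=(x_+,0)$ and $H=(0,h_-)$ into the differential obtained in the proof of Theorem~\ref{theorem:EL_gnl}. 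Because $h_+ = 0$, the entire $h_+$-integral drops out and only
\begin{equation*}
d\mc{A}(\hat{L})(x_+,0)(0,h_-) = \int_a^b \left[ \partial_1 \tilde{L}(X(t),\mc{D} X(t),t)\, h_-(t) + \partial_2 \tilde{L}(X(t),\mc{D} X(t),t)\, \mc{D}^- h_-(t) \right] dt
\end{equation*}
survives.

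The decisive step is the integration by parts, and this is precisely where the choice $H\in\mc{H}^-$ rather than $H\in\mc{H}^+$ changes the outcome. I would apply \eqref{IPP} with $f = \partial_2 \tilde{L}(X(\bullet),\mc{D} X(\bullet),\bullet)$ and $g = h_-$, which transfers the derivative as $\int_a^b \partial_2 \tilde{L}\,\mc{D}^- h_- \, dt = -\int_a^b \mc{D}^+\partial_2 \tilde{L}\, h_-\, dt + R_{ab}(\partial_2 \tilde{L}, h_-)$. The hypothesis $t\mapsto \partial_2 \tilde{L}(X(t),\mc{D} X(t),t)\in\mc{U}$ guarantees that \eqref{IPP} is applicable, and the membership $H\in\mc{H}^-$ forces $R_{ab}(\partial_2 \tilde{L},h_-)=0$, so no boundary term remains. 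Note the contrast with the preceding $\mc{H}^+$-corollary: there the variation sits in the first slot, the operator acting on it is $\mc{D}^+$, and \eqref{IPP} converts it into $\mc{D}^-$, producing the anti-causal equation \eqref{EL_p_anti}; here the $\mc{D}^-$ falling on $h_-$ is converted into the causal $\mc{D}^+$ on $\partial_2\tilde{L}$. This operator flip is the whole point of the theorem.

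After the integration by parts the differential reads $\int_a^b [\partial_1 \tilde{L} - \mc{D}^+ \partial_2 \tilde{L}]\,h_-(t)\,dt$, and, exactly as in Theorem~\ref{theorem:EL_gnl}, the bracketed integrand lies in $C^0([a,b],\R^n)$. Since $C^\infty_c([a,b],\R^n)\subset\mc{U}$ and compactly supported functions make every $R_{ab}$ contribution vanish, the variations $(0,\phi)$ with $\phi\in C^\infty_c([a,b],\R^n)$ lie in $\mc{H}^-$; applying \cite[theorem 1.2.4]{Hormander} then shows that $d\mc{A}(\hat{L})(x_+,0)$ vanishes on all of $\mc{H}^-$ if and only if $\partial_1 \tilde{L} - \mc{D}^+ \partial_2 \tilde{L} = 0$ pointwise. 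Finally I would unfold the tilde notation: for $X=(x_+,0)$ one has $\mc{D} X = (\mc{D}^+ x_+, 0)$, hence $\partial_i \tilde{L}(X(t),\mc{D} X(t),t) = \partial_i L(x_+(t),\mc{D}^+ x_+(t),t)$ for $i=1,2$, which turns the condition into \eqref{EL_p}. The only point requiring genuine care is the step just described, namely checking that $\mc{H}^-$ still contains enough test functions for the fundamental lemma to apply after the restriction from $\mc{H}$; this is exactly the role of the standing assumption $C^\infty_c([a,b],\R^n)\subset\mc{U}$.
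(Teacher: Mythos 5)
Your proof is correct and takes essentially the same route as the paper, which states this theorem as a direct consequence of Theorem \ref{theorem:EL_gnl}: you restrict the first variation already computed there to $H=(0,h_-)\in\mc{H}^-$, use \eqref{IPP} with $f=\partial_2\tilde{L}$ and $g=h_-$ so that the $\mc{D}^-$ acting on the variation flips into the causal $\mc{D}^+$ acting on $\partial_2\tilde{L}$, kill the boundary term via $R_{ab}(f,h_-)=0$, and conclude with the fundamental lemma. Your explicit verification that $\mc{H}^-$ still contains the test variations $(0,\phi)$, $\phi\in C^\infty_c([a,b],\R^n)$, is a point the paper leaves implicit, and is a welcome addition rather than a deviation.
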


We obtain once again a unique equation, but which is now causal in the direction \emph{past $\rightarrow$ future}. We have a similar result for the other direction.

\begin{corollary}
Let $(0,x_-) \in \mc{V}^-$. We suppose that $t \mapsto \partial_2 L (x_-(t), \mc{D}^- x_-(t), t) \in \mc{U}$. Then we have the following equivalence:

the function $(0,x_-)$ is a $\mc{H}^+$-extremal of the action $\mc{A}(\hat{L})$ if and only if $x_-$ verifies 
\begin{equation} \label{EL_m}
\partial_1 L(x_-(t), \mc{D}^- x_-(t), t) - \mc{D}^- \partial_2 L(x_-(t), \mc{D}^- x_-(t), t) = 0,
\end{equation}
for all $t \in [a,b]$.
\end{corollary}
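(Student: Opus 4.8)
The plan is to obtain this statement as the mirror image of the preceding theorem, reusing wholesale the differential computation performed in the proof of theorem \ref{theorem:EL_gnl} rather than redoing the calculus of variations from scratch. Recall that in that proof the differential of the action at a point $X \in \mc{V}$, evaluated on a variation $H=(h_+,h_-) \in \mc{H}$, was shown to split into two independent pieces: an $h_+$-integral whose integrand carries the operator $\mc{D}^-$, namely $\partial_1 \tilde{L} - \mc{D}^- \partial_2 \tilde{L}$, and an $h_-$-integral whose integrand carries $\mc{D}^+$, the two boundary terms produced by \eqref{IPP} vanishing precisely because $H \in \mc{H}$. The whole argument then reduces to specialising the base point $X$ and the variation space, and invoking lemma \ref{lemma:CN}.

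First I would specialise the base point to $X=(0,x_-) \in \mc{V}^-$. Then $\mc{D} X = (0, \mc{D}^- x_-)$, and by the very definition of $\tilde{L}$ one has $\tilde{L}(X(t), \mc{D} X(t), t) = L(x_-(t), \mc{D}^- x_-(t), t)$; consequently $\partial_1 \tilde{L}$ and $\partial_2 \tilde{L}$ evaluated along $X$ collapse to $\partial_1 L$ and $\partial_2 L$ taken at $(x_-(t), \mc{D}^- x_-(t), t)$. Next I would restrict the variations to $\mc{H}^+ = \mc{H} \cap \mc{V}^+$, whose elements have the form $H=(h_+,0)$; this annihilates the $h_-$-integral and leaves only the surviving $h_+$-piece,
\begin{equation*}
d\mc{A}(\hat{L})(0,x_-)(H) = \int_a^b \left[ \partial_1 L(x_-(t), \mc{D}^- x_-(t), t) - \mc{D}^- \partial_2 L(x_-(t), \mc{D}^- x_-(t), t) \right] h_+(t) \, dt.
\end{equation*}
By lemma \ref{lemma:CN}, $(0,x_-)$ is a $\mc{H}^+$-extremal if and only if this pairing vanishes for every $(h_+,0) \in \mc{H}^+$, and I would finish with the fundamental lemma of the calculus of variations (\cite[theorem 1.2.4]{Hormander}) to pass from the vanishing of the pairing to the pointwise identity \eqref{EL_m}.

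I do not expect a genuine obstacle here, since the heavy lifting was done in theorem \ref{theorem:EL_gnl}; the only points requiring care are the two hypotheses of the fundamental lemma. First, the bracketed Euler-Lagrange expression must be continuous on $[a,b]$: this holds because $X \in \mc{U}$ makes $t \mapsto \partial_1 L(x_-(t), \mc{D}^- x_-(t), t)$ continuous, while the standing assumption $t \mapsto \partial_2 L(x_-(t), \mc{D}^- x_-(t), t) \in \mc{U}$ guarantees that its image under $\mc{D}^-$ is continuous too. Second, the test space must be rich enough: any $h_+ \in C^\infty_c([a,b],\R^n)$ gives $(h_+,0) \in \mc{H}^+$, since $h_+$ and all its derivatives vanish at $a$ and $b$ and hence kill the boundary functional, so that $R_{ab}(h_+,f)=0$ for every $f \in \mc{U}$; thus $C^\infty_c([a,b],\R^n) \times \{0\} \subset \mc{H}^+$. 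Conceptually, the one feature worth emphasising is that confining the variations to the \emph{future} component $\mc{V}^+$ is exactly what selects the $\mc{D}^-$-equation, which is the causal one in the direction \emph{future $\rightarrow$ past}; the corollary is therefore the past-directed counterpart of the causality result established for $x_+$, and again illustrates that causality emerges precisely when the time arrows of trajectory and variation are opposed.
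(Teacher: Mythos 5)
Your proposal is correct and follows exactly the route the paper intends: the corollary is deduced by specialising the differential $d\mc{A}(\hat{L})(X)(H)$ computed in the proof of theorem \ref{theorem:EL_gnl} to the base point $X=(0,x_-)$ and to variations $H=(h_+,0) \in \mc{H}^+$, so that only the $h_+$-integral with the $\mc{D}^-$-bracket survives, and then applying lemma \ref{lemma:CN} together with the fundamental lemma of the calculus of variations. Your additional checks (continuity of the bracketed integrand via the hypothesis $t \mapsto \partial_2 L(x_-(t), \mc{D}^- x_-(t), t) \in \mc{U}$, and the inclusion $C^\infty_c([a,b],\R^n) \times \{0\} \subset \mc{H}^+$) are precisely the points the paper uses implicitly, so nothing is missing.
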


This equation is causal in the direction \emph{future $\rightarrow$ past}.

Both of the equations \eqref{EL_p} and \eqref{EL_m} stem from a least action principle, and moreover respect the causality principle.
In addition, no significant restriction is done on the non-zero components of the variations. Therefore we obtain a least action principle similar to the classical one (with $d/dt$), except that trajectories and variations are not ruled by the same dynamics.

This result may seem surprising: it shows that the equations of the dynamics in a given temporal direction are obtained through variations evolving in the opposite way. Let us discuss this paradox.

The least action principle is a global vision of the dynamics: the trajectory is directly determined on its whole temporal interval $[a,b]$. According to Poincaré, the system ``seems to know the point to which we want to take it''. In \cite{Martin-Robine}, an history of this principle is presented, and it is shown that this finalist aspect has been the subject of controversies since its formulation by Maupertuis in 1746 \cite{Maupertuis_1746}. 
By making the past and the future of the trajectory appear explicitly, equation \eqref{EL_p_anti} is in agreement with this global approach.
For example, in the fractional framwork, $\mc{D}^+ = \Drl{a}{t}{\alpha}$ and takes into account the whole past ($[a,t]$) of the trajectory and $\mc{D}^- = - \Drl{t}{b}{\alpha}$, all the future ($[t,b]$).
So it may seem that variational formulation (global) and causal equation could be incompatible. But as it has just been shown, the variations can lift this difficulty. Because they obey to the reverse dynamics, they ``catch'' the anti-causal part of the least action principle. Then the equations of the dynamics can respect causality. In the sum $x_+ + h_-$, we add two comparable functions, but their underlying dynamical and physical natures differ. The trajectory $x_+$ may be called ``real'', \emph{actual}, while the variation $h_-$ may be seen as ``virtual'', \emph{potential}. To sum up, the finalist aspect of the least action principle may lie in the nature of the variations. Given that these do not possess a concrete realisation, this problematic characteristic of the principle may seem less disconcerting. 

\begin{remark}
In the case of a dynamics governed by $d/dt$, this discussion is obscured because of the local aspect of the derivative.
\end{remark}

Furthermore, we note that \eqref{EL_p} and \eqref{EL_m} are identical to \eqref{EL_plgt_p} and \eqref{EL_plgt_m}. So we obtain the following commutative diagram
\begin{equation*}
\xymatrix{
L(x(t),\frac{d}{dt} x(t),t) \ar[d]_{LAP} \ar[r]^{AE} & \tilde{L}(X(t),\mc{D} X(t),t) \ar[d]^{LAP} \\
\left[ \partial_1 L - \frac{d}{dt} \partial_2 L \right] (x(t),\frac{d}{dt} x(t),t) = 0  \ar[r]_{AE} & 
\left[ \partial_1 L - \mc{D}^\pm \partial_2 L \right](x_\pm(t), \mc{D}^\pm x_\pm(t), t) = 0,
}
\end{equation*}

where $LAP$ means ``least action principle'' and $AE$ ``asymmetric embedding''. Following \cite{Cresson}, we say that the asymmetric embedding is \emph{coherent}. 

This ``robustness'' motivates the adaptation of definition \ref{definition:asy_dyn_rep} for Lagrangian systems.

\begin{definition}
The asymmetric dynamical representation of a Lagrangian system with Lagrangian $L$ is defined by the couple $(x_+, x_-) \in \R^n \times \R^n$ and by their respective temporal evolutions governed by the following differential equations
\begin{eqnarray*} 
\partial_1 L(x_+(t), \mc{D}^+ x_+(t),t) - \mc{D}^+ \partial_2 L (x_+(t), \mc{D}^+ x_+(t),t) & = & 0, \nonumber \\
\partial_1 L(x_-(t), \mc{D}^- x_-(t),t) - \mc{D}^- \partial_2 L (x_-(t), \mc{D}^- x_-(t),t) & = & 0. \nonumber
\end{eqnarray*} 
\end{definition}

The asymmetric embedding of those systems is now illustrated with few operators $\mc{D}^+$ and $\mc{D}^-$.


\section{Particular cases} \label{Section:Cas_part}

We consider the same Lagrangian system $\mc{S}$ as above, with Lagrangian $L$.
First the case of the finite differences is presented. These provide an clear illustration of the topic. Then the degenerate case of the classical derivative is adressed, before finishing with the fractional operators.

\subsection{Finite differences}

For $\varepsilon > 0$ fixed, we choose $\mc{D}^\pm = d^\pm_\varepsilon$, with
\begin{eqnarray}
\Dfp f(t) & = & \dfrac{1}{\varepsilon}(f(t)-f(t-\varepsilon)), \label {Dfp} \\
\Dfm f(t) & = & \dfrac{1}{\varepsilon}(f(t+\varepsilon)-f(t)). \nonumber
\end{eqnarray} 

We verify that the operator $\Dfp$ takes into account the past of $f$ and $\Dfm$ the future. 

For $a=-\infty$ and $b=+\infty$, we have $\mc{U} = C^0(\R,\R^n)$. We verify that $C^\infty_c(\R,\R^n) \subset \mc{U}$.

We also have a relation similar to \eqref{IPP}:
\begin{equation*}
\int_\R \Dfp f(t) \, g(t) \, dt = - \int_\R f(t) \, \Dfm g(t) \, dt.
\end{equation*} 

In this case, $R_{ab}(f,g) = 0$ and $\mc{H} = \mc{V}$.

For those operators, the asymmetric dynamical representation is
\begin{eqnarray}
\partial_1 L(x_+(t), \Dfp x_+(t),t) - \Dfp \partial_2 L(x_+(t), \Dfp x_+(t),t) = 0, \label{EL_Dfp} \\
\partial_1 L(x_-(t), \Dfm x_-(t),t) - \Dfm \partial_2 L(x_-(t), \Dfm x_-(t),t) = 0. \label{EL_Dfm}
\end{eqnarray} 

\begin{remark}
Concerning the least action principle, we may note that the condition $t \mapsto \partial_2 \tilde{L} (X(t), \mc{D} X(t), t) \in \mc{U}$ is always verified in this case.
\end{remark}

This example is a good illustration of the embedding notion. When we observe (or simulate) the evolution of a classical Lagrangian system (ruled by the operator $d/dt$), we only measure its state at particular instants. The observations remain ponctual, never continuous. The velocity is calculated from those ponctual measures, with a formula similar to \eqref{Dfp}. 
Experimentally, we cannot choose $\varepsilon$ as small as we want in order to recover $d/dt$. Besides, even if we suppose the time step very small, we would approximate $d_+$ defined by \eqref{dp} and not $d/dt$. The dynamics that we observe is therefore not ruled by $d/dt$, but by $\Dfp$.

\subsection{Classical derivative}

Nevertheless, let us look at the case where the embedding does not modify the temporal evolution operator. We have $\mc{D}^+ = \mc{D}^- = d/dt$. 
We consider an interval $[a,b]$, with $-\infty \leq a < b \leq +\infty$.
In this case, $\mc{U} = C^1([a,b],\R^n)$. Once again, $C^\infty_c(\R,\R^n) \subset \mc{U}$. 
The relation \eqref{IPP} is the classical integration by parts
\begin{equation*}
\int_a^b \frac{d}{dt} f(t) \, g(t) \, dt = - \int_a^b f(t) \, \frac{d}{dt} g(t) \, dt + f(b)g(b) - f(a) g(a).
\end{equation*} 
Here we have $R_{ab}(f,g) = f(b)g(b) - f(a) g(a)$ and $\mc{H} = \{ (h_+,h_-) \in \mc{V} \, | \, h_+(a)=h_-(a)=h_+(b)=h_-(b)=0 \}$.

The asymmetric dynamical representation becomes redundant:
\begin{eqnarray*}
\partial_1 L(x_+(t), \frac{d}{dt} x_+(t),t) - \frac{d}{dt} \partial_2 L(x_+(t), \frac{d}{dt} x_+(t),t) = 0, \\
\partial_1 L(x_-(t), \frac{d}{dt} x_-(t),t) - \frac{d}{dt} \partial_2 L(x_-(t), \frac{d}{dt} x_-(t),t) = 0.
\end{eqnarray*} 

The equations are identical in both directions, are moreover similar to the reference equation \eqref{EL}. According to our approach on irreversibility, the system is in this case reversible. This point could be discussed.

\subsection{Fractional derivatives}

For a detailed presentation of the fractional calculus, we refer to \cite{Oldham,Podlubny,Samko}. For fractional derivatives, the role of the past and the future clearly appear. Regarding the different non causal Euler-Lagrange equations derived in \cite{Agrawal, Baleanu_Agrawal, Cresson, Riewe_96}, the main contribution of this paper is the obtention of a causal equation.

We consider an interval $[a,b]$, with $-\infty \leq a < b \leq +\infty$, and we consider the Riemann-Liouville fractional derivative of order $\alpha$, with $0<\alpha<1$. 
For the fractional case, precising $\mc{U}$ is not easy. We refer to \cite{Samko} for precisions. From \cite[p.159]{Samko}, we still have $C^\infty_c(\R,\R^n) \subset \mc{U}$.

Moreover, as indicated in \cite{Inizan_HFE}, we introduce a extrinsic time constant $\tau$ in order to retain the dimensional homogeneity of the equations. By the way, let us mention that the methods exposed in \cite{Inizan_HFE} naturally apply to the asymmetric embedding. Therefore it is possible to obtain fractional equations which stem from a least action principle, and which additionnally preserve causality and homogeneity. So we set $\mc{D}^+ = \tau^{\alpha-1} \Drl{a}{t}{\alpha}$ and $\mc{D}^- = - \tau^{\alpha-1} \Drl{t}{b}{\alpha}$, with
\begin{eqnarray*}
\Drl{a}{t}{\alpha} f(t) & = & \frac{1}{\Gamma(1-\alpha)} \frac{d}{du} \int_a^t (t-u)^{-\alpha} f(u) \, du, \\
\Drl{t}{b}{\alpha} f(t) & = & - \frac{1}{\Gamma(1-\alpha)} \frac{d}{du} \int_t^b (u-t)^{-\alpha} f(u) \, du.
\end{eqnarray*}

Those operators deal with the integrality of the past and the future of the function. The question of causality is hence momentous here. Relation \eqref{IPP} is now 
\begin{equation*}
\int_a^b \Drl{a}{t}{\alpha} f(t) \, g(t) \, dt = - \int_a^b f(t) \, (-\Drl{t}{b}{\alpha}) g(t) \, dt,
\end{equation*} 
which implies $R_{ab}(f,g) = 0$ and $\mc{H}=\mc{V}$. If we had chosen the Caputo fractional derivative, the only difference would have been $R_{ab}(f,g) \neq 0$.

The asymmetric dynamical representation is given here by
\begin{eqnarray}
\partial_1 L(x_+(t), \tau^{\alpha-1} \Drl{a}{t}{\alpha} x_+(t),t) - \tau^{\alpha-1} \Drl{a}{t}{\alpha} \partial_2 L(x_+(t), \tau^{\alpha-1} \Drl{a}{t}{\alpha} x_+(t),t) = 0, \label{EL_p_frac} \\
\partial_1 L(x_-(t), -\tau^{\alpha-1} \Drl{t}{b}{\alpha} x_-(t),t) + \tau^{\alpha-1} \Drl{t}{b}{\alpha} \partial_2 L(x_-(t), -\tau^{\alpha-1} \Drl{t}{b}{\alpha} x_-(t),t) = 0. \label{EL_m_frac}
\end{eqnarray}

\begin{example}
We finish with the case of the harmonic oscillator, with $a= -\infty$ and $b=+\infty$. Its Lagrangian may be written as
\begin{equation*}
L(x,v) = \frac{1}{2} \, v^2 - \omega^2 x^2.
\end{equation*} 

In this case, \eqref{EL_p_frac} and \eqref{EL_m_frac} are
\begin{eqnarray*}
\tau^{2\alpha -2} (\Drl{-\infty}{t}{\alpha} \circ \Drl{-\infty}{t}{\alpha}) x_+(t) + \omega^2 x_+(t) & = & 0,  \\
\tau^{2\alpha -2} (\Drl{t}{+\infty}{\alpha} \circ \Drl{t}{+\infty}{\alpha}) x_-(t) + \omega^2 x_-(t) & = & 0.
\end{eqnarray*} 

Given that $\Drl{-\infty}{t}{\alpha} \circ \Drl{-\infty}{t}{\alpha} = \Drl{-\infty}{t}{2\alpha}$ and $\Drl{t}{+\infty}{\alpha} \circ \Drl{t}{+\infty}{\alpha} = \Drl{t}{+\infty}{2 \alpha}$, we obtain
\begin{eqnarray*}
\tau^{2\alpha -2} \Drl{-\infty}{t}{2\alpha} x_+(t) + \omega^2 x_+(t) & = & 0,  \\
\tau^{2\alpha -2} \Drl{t}{+\infty}{2\alpha} x_-(t) + \omega^2 x_-(t) & = & 0.
\end{eqnarray*}

For $\alpha = 1$, $\Drl{-\infty}{t}{2\alpha} = \Drl{t}{+\infty}{2\alpha} = d^2 / dt^2$. We recover the classical equations
\begin{eqnarray*}
\frac{d^2}{dt^2} x_+(t) + \omega^2 x_+(t) & = & 0,  \\
\frac{d^2}{dt^2} x_-(t) + \omega^2 x_-(t) & = & 0.
\end{eqnarray*}

The dynamics is reversible, which is in agreement with the usual notion of reversibility (invariance under the transformation $t \mapsto -t$).

On the other hand, for $\alpha = 1/2$, we have $\Drl{-\infty}{t}{2\alpha} = d/dt$ and $\Drl{t}{+\infty}{2\alpha} = - d/dt$, which leads to
\begin{eqnarray*}
\frac{d}{dt} x_+(t) + \tau \omega^2 x_+(t) & = & 0,  \\
- \frac{d}{dt} x_-(t) + \tau \omega^2 x_-(t) & = & 0.
\end{eqnarray*}

Those two equations differ: the dynamics is therefore irreversible. Once again, this result agrees with the usual definition of irreversibility.

\end{example}


\section{Conclusion} \label{Section:Concl}

By distinguishing left and right derivatives, a formal approach on irreversibility has been presented. It may help clarify the role of past and future in the least action principle. In particular, trajectories and variations have been uncoupled at the dynamical level. This differentiation has led to the obtention of Euler-Lagrange equations which respect causality, even in the fractional case.

We underline once again that this point of view on irreversibility is formal and does not explain the physical origin of this phenomenon. However, it may constitute a trail for a better understanding, through the role of the temporal evolution operators. Without closing the debate on the least action pinciple, this paper precises the relation between this principle and causality. Moreover, the formalism presented here associated to that in \cite{Inizan_HFE} finally conciliates fractional Euler-Lagrange equations with fundamental physical principles. We hope that this mathematical framework, now physically satisfying, will facilitate the description of phenomena still not properly understood, particularly in the fractional case. We think for example of some Hamiltonian chaotic systems, for which fractional diffusion models have been developped in \cite{Zasl_HCFD}. In those systems, fractional dynamics might arise with the construction of a new ``macroscopic'' time \cite{Hilfer_FFD}, based on recurrence times and describing long-term behaviours.


\bibliographystyle{plain}
\bibliography{Biblio_Irreversibility}

\end{document}